\newtheorem{theorem}{Theorem}%[section]
\definecolor{darkgreen}{RGB}{50,190,50}
\definecolor{darkblue}{RGB}{0,0,190}
\definecolor{darkred}{RGB}{238,0,0}
\newcommand{\tr}{\textnormal{Tr}}
\newcommand{\bra}[1]{\ensuremath{\left\langle\right. #1 \left.\right|}}
\newcommand{\ket}[1]{\ensuremath{\left|\right. #1 \left.\right\rangle}}
\newcommand{\fbra}[1]{\ensuremath{\left\langle\!\hspace*{-0.7pt}\left\langle\right.\right.\! #1 \!\left.\left.\right|\!\right|}}
\newcommand{\fket}[1]{\ensuremath{\left|\!\left|\right.\right.\! #1 \!\left.\left.\right\rangle\!\hspace*{-0.7pt}\right\rangle}}
\newcommand{\anticomm}[2]{\ensuremath{\left\{\right.\! #1 \,, #2 \!\left.\right\}}}
\newcommand{\scpr}[2]{\ensuremath{\left\langle\right. #1 \,\left|\right. #2 \left.\right\rangle}}
\newcommand{\fscpr}[2]{\ensuremath{\left\langle\!\hspace*{-0.7pt}\left\langle\right.\right.\! #1 \!\left.\left.\right|\!\right| #2 \!\left.\left.\right\rangle\!\hspace*{-0.7pt}\right\rangle}}
\newcommand{\pr}{^{\prime}}
\newcommand{\prpr}{^{\prime\hspace*{-0.5pt}\prime}}
\DeclareMathOperator{\spectr}{spectr}
\begin{document}

\title{Fermionic mode entanglement in quantum information
%On the mode entanglement in fermionic Fock spaces
%On the ambiguities in fermionic mode entanglement
}

\date{December 2012}

\author{Nicolai Friis$^{1}$}
\email{pmxnf@nottingham.ac.uk}
\author{Antony R.~Lee$^{1}$}
\email{pmxal3@nottingham.ac.uk}
\author{David Edward Bruschi$^{1,2}$}
\email{david.edward.bruschi@gmail.com}
\affiliation{
$^{1}$School of Mathematical Sciences,
University of Nottingham,
University Park,
Nottingham NG7 2RD,
United Kingdom}
\affiliation{
$^{2}$School of Electronic and Electrical Engineering,
University of Leeds,
Woodhouse Lane,
Leeds LS2 9JT,
United Kingdom}

\begin{abstract}
We analyze %the role of
fermionic modes as fundamental entities for quantum
information processing. To this end we construct a density operator formalism on
the underlying Fock space and demonstrate how it can be naturally and unambiguously
equipped with a notion of subsystems in the absence of a global tensor product
structure. We argue that any apparent similarities between fermionic modes and qubits
are superficial and can only be applied in limited situations. In particular, we
discuss the ambiguities that arise from different treatments of this subject. Our
results are independent of the specific context of the fermionic fields as long as
the canonical anticommutation relations are satisfied, e.g., in relativistic quantum
fields or fermionic trapped ions.
\end{abstract}

\pacs{
%42.50.-p,   %Quantum optics
03.65.Ud,   %Entanglement and quantum non-locality
%03.67.Lx,   %Quantum computation architectures and implementations
%85.25.-j    %Superconducting devices
11.10.-z   %Field theory
}
\maketitle

\section{Introduction}\label{sec:intro}

Fermionic systems have been analyzed as agents for quantum information processing in a multitude of
studies, ranging from discussions of fermionic modes of relativistic quantum fields~\cite{Shi2004,AlsingFuentes-SchullerMannTessier2006,FuentesMannMartin-MartinezMoradi2010,
BruschiLoukoMartin-MartinezDraganFuentes2010,Martin-MartinezFuentes2011,FriisKoehlerMartinMartinezBertlmann2011,
SmithMann2012,FriisLeeBruschiLouko2012,FriisBruschiLoukoFuentes2012,FriisHuberFuentesBruschi2012,
MonteroMartin-Martinez2012b}, over
fermionic lattices~\cite{Zanardi2002}, and fermionic Gaussian states~\cite{BoteroReznik2004}, to discussions of the entanglement between fixed
numbers of indistinguishable particles~\cite{SchliemannLossMacDonald2000,SchliemannCiracKusLewensteinLoss2001,PaskauskasYou2001,
LiZengLiuLong2001,EckertSchliemannBrussLewenstein2002,Shi2003,WisemanVaccaro2003,WisemanBartlettVaccaro2003,GhirardiMarinatto2004,
CabanPodlaskiRembielinskiSmolinskiWalczak2005,IeminiVianna2012}. In the latter case, only pure states of fixed particle numbers
are considered and a selection of entanglement measures are available, see, e.g., Ref.~\cite{WisemanVaccaro2003}. However, these
restrictions seem to be much more limiting than required. From the point of view of quantum information theory it is natural to ask for
an extension to incoherent mixtures of quantum states. Furthermore, from the perspective of a relativistic description particle numbers
are not usually conserved, i.e., the particle content of a given pure state is observer dependent~\cite{BirrellDavies:QFbook}. The
description of fermionic entanglement should therefore include coherent and incoherent mixtures of different particle numbers. Any
required superselection rules, e.g., for (electric) charge~\cite{StrocchiWightman1974} or parity, can then be considered as special cases of
such a framework.
%%Only the coherent superpositions of pure states of different (electric) charges are forbidden~\cite{StrocchiWightman1974}.

In the light of this fact it is therefore reasonable to consider the entanglement between fermionic modes, in a similar way as is
conventionally done for bosonic modes, e.g., for Gaussian states~\cite{AdessoIlluminati2005}. We shall show here that the entanglement
of a system of fermionic modes can be defined unambiguously by enforcing a physically reasonable definition of its subsystems. This procedure
is completely independent of any superselection rules.

A central question that appears in practical situations is: \emph{Can fermionic modes be considered as qubits?} The short answer to this question is ``No." Due to the Pauli exclusion principle, fermionic modes are naturally restricted to two degrees of freedom, i.e., each mode can be unoccupied or contain a single excitation. This has provided many researchers with an ad hoc justification for the comparison with qubits -- two-level systems used in quantum information, which has incited debates among scientists, see, e.g., the exchange in Refs.~\cite{MonteroMartin-Martinez2011b,BradlerJauregui2012,MonteroMartin-Martinez2012a,Bradler2011}. In limited situations certain techniques from the study of qubits can indeed be applied to fermionic systems. However, while mappings between fermionic systems and qubits are possible in principle, e.g., via the Jordan-Wigner transformation~\cite{BanulsCiracWolf2007}, the problem lies in the consistent mapping between the subsystems. In the following we shall give a more precise answer to the question above, along with a detailed description of the problem.

Any superselection rules further restrict the possible operations that can be performed on single-mode subsystems, and it was argued that this should lead to a modified definition of the entanglement between modes~\cite{WisemanVaccaro2003}. At least for fixed particle content this problem can be circumvented~\cite{HeaneyVedral2009}. Moreover, even if quantum correlations are not directly accessible, a transfer of the entanglement to systems that are not encumbered by such restrictions should be possible, thus justifying the use of unmodified measures for mode entanglement.

The main aim of this paper is establishing a clear framework for the implementation of fermionic field modes as vessels for quantum information
tasks. To this end we present an analysis of the problem at hand, i.e., how the modes in a fermionic Fock space can be utilized as subsystems for quantum information processing. We present a framework that is based on simple physical requirements in which this can be achieved.
We further discuss the issues and restrictions in mapping fermionic modes to qubits and we show how previous work and proposed solutions, e.g., invoking superselection rules~\cite{BradlerJauregui2012}, fit into this framework.

The article is structured as follows: We start with a brief discussion of the description of fermionic Fock spaces in Sec.~\ref{sec:fermionic Fock space} and how density operators are constructed on such spaces in Sec.~\ref{sec:fermionic density operators}. We then go on to formulate the ``fermionic ambiguity'' that has been pointed out in Ref.~\cite{MonteroMartin-Martinez2011b} in Sec.~\ref{sec:the fermionic ambiguity}. Subsequently, we reinterpret this as an ambiguity in the definition of mode subsystems, which can be resolved by
physical consistency conditions, in Sec.~\ref{sec:partial trace ambiguity}. Finally, we discuss the implications for the quantification of entanglement between two fermionic modes in Sec.~\ref{sec:entanglement of fermionic modes}, before we investigate situations beyond two modes in Sec.~\ref{sec:fermionic entanglement beyond 2 modes}.

\section{The fermionic Fock space}\label{sec:fermionic Fock space}

Let us consider a (discrete) set of solutions $\psi_{n}$ to a (relativistic) field equation, e.g., the
Dirac equation. In the decomposition of the quantum field $\psi$, each mode function $\psi_{n}$ is
assigned an annihilation operator $b_{n}$ and a creation operator $b_{n}^{\dagger}$, which satisfy the
canonical anticommutation relations
\begin{subequations}
\label{eq:anticomm relations}
    \begin{align}
        \anticomm{b_{m}}{b_{n}^{\dagger}} &=\,\delta_{mn}\,,
        \label{eq:anticomm non-vanishing}\\[1.5mm]
        \anticomm{b_{m}}{b_{n}} &=\,\anticomm{b_{m}^{\dagger}}{b_{n}^{\dagger}}\,=\,0\,,
        \label{eq:anticomm vanishing}
    \end{align}
\end{subequations}
where $\anticomm{.}{.}$ denotes the anticommutator. One may introduce a different notation for the annihilation
and creation operators for modes with opposite charge (see, e.g., Ref.~\cite{FriisLeeBruschiLouko2012}), but for the
purpose of our analysis here this is inconsequential and we can work only with $b_{n}$ and $b_{n}^{\dagger}$.
The creation operators $b_{n}^{\dagger}$, acting upon the vacuum state $\ket{0}$, will populate the vacuum with a single excitation, i.e.,
\begin{align}
    \ket{\!\psi_{n}\!}   &=  \,b_{n}^{\dagger}\,\ket{0}\,,
    \label{eq:single fermion excitation}
\end{align}
while the vacuum is annihilated by all $b_{n}$, i.e., $b_{n}\ket{0}=0\ \forall\,n$. As can be quickly seen
from this property and Eq.~(\ref{eq:anticomm non-vanishing}), the states $\ket{\!\psi_{n}\!}$ are orthonormal.
The states $\ket{\!\psi_{n}\!}$ further form a complete basis of the single-particle Hilbert space
$\mathcal{H}_{1-p}$, whereas $\ket{0}\in\mathcal{H}_{0-p}\neq\mathcal{H}_{1-p}$.
A general state in $\mathcal{H}_{1-p}$ has the form
\begin{align}
    \ket{\!\psi^{1-p}\!}    &=\,\sum\limits_{i}\,\mu_{i}\,\ket{\!\psi_{i}\!}\,,
    \label{eq:general single fermion state}
\end{align}
with $\sum_{i}|\mu_{i}|^{2}=1$ such that $\scpr{\psi^{1-p}}{\psi^{1-p}}=1$. Let us now turn to states of multiple
fermions. A second fermion can be added to the state~(\ref{eq:single fermion excitation}) by the action of another
creation operator $b_{m}^{\dagger}$, i.e.,
\begin{align}
    b_{m}^{\dagger}b_{n}^{\dagger}\,\ket{0} &\propto\,\ket{\!\psi_{m},\psi_{n}\!}\,.
    \label{eq:two fermion excitation}
\end{align}
Clearly, the anticommutation relations (\ref{eq:anticomm relations}) require the two-fermion state to be
antisymmetric with respect to the exchange of the mode labels $m$ and $n$. We therefore define
\begin{align}
    \ket{\!\psi_{m},\psi_{n}\!} &=\,b_{m}^{\dagger}b_{n}^{\dagger}\,\ket{0}\,=\,
    \ket{\!\psi_{m}\!}\wedge\ket{\!\psi_{n}\!}
    \label{eq:two fermion state definition}
    \\[1.0mm]
    &=\,
    \tfrac{1}{\sqrt{2}}\Bigl(\ket{\!\psi_{m}\!}\otimes\ket{\!\psi_{n}\!}\,-\,\ket{\!\psi_{n}\!}\otimes\ket{\!\psi_{m}\!}\Bigr)\,.
    \nonumber
\end{align}
The two-fermion states are thus elements of the antisymmetrized tensor product space of two single-fermion Hilbert spaces, i.e.,
\begin{align}
    \mathcal{H}_{2-p}   &=\,\bar{S}\Bigl(\mathcal{H}_{1-p}\otimes\mathcal{H}_{1-p}\Bigr)\,,
    \label{eq:two fermion space}
\end{align}
and a general state within this space can be written as
\begin{align}
    \ket{\!\psi^{2-p}\!}    &=\,\sum\limits_{i,j}\,\mu_{ij}\,\ket{\!\psi_{i},\psi_{j}\!}\,,
    \label{eq:general two fermion state}
\end{align}
where the coefficients $\mu_{ij}$ form an antisymmetric matrix. States with more than two fermions can then be
constructed by antisymmetrizing over the corresponding number of single-fermion states. Finally, the $n$-mode \emph{fermionic
Fock space} $\bar{\mathcal{F}}_{n}$ is simply given as the direct sum over all fermion numbers of the antisymmetrized
Hilbert spaces, i.e.,
\begin{align}
    \bar{\mathcal{F}}_{n}(\mathcal{H}_{1-p})    &=\,
    \bigoplus\limits_{m=1}^{n}\bar{S}\Bigl(\mathcal{H}_{1-p}^{\otimes m}\Bigr)
    \label{eq:fermionic Fock space}\\[1.0mm]
    &=\,
    \mathcal{H}_{0-p}\oplus\mathcal{H}_{1-p}\oplus\bar{S}\Bigl(\mathcal{H}_{1-p}\otimes\mathcal{H}_{1-p}\Bigr)\oplus\ldots\,,
    \nonumber
\end{align}
where $
%\begin{align}
    \mathcal{H}^{\otimes m} %&
    %=\,\underbrace{\mathcal{H}\otimes\ldots\otimes\mathcal{H}}_{\mathrm{m}}
%    \label{eq:m fold tensor product}
%\end{align}
$ denotes the $m$-fold tensor product and we write $\mathcal{H}_{0-p}$ as $\mathcal{H}_{1-p}^{\otimes 0}$. A general state in the space $\bar{\mathcal{F}}_{n}$ can be written as
\begin{align}
    \ket{\Psi^{\bar{\mathcal{F}}_{n}}}    &=\,
    \mu_{0}\ket{0}\,\oplus\,\sum\limits_{i=1}^{n}\mu_{i}\ket{\!\psi_{i}\!}
    \label{eq:fermionic Fock space general state}\\[1.0mm]
    &\ \oplus
    \sum\limits_{j,k}\mu_{jk}\ket{\!\psi_{j}\!}\wedge\ket{\!\psi_{k}\!}\,\oplus\,\ldots\,.
    \nonumber
\end{align}
Let us now simplify the notation. From now on we will denote states in the fermionic
Fock space by double-lined Dirac notation, i.e., $\fket{.}$ instead of $\ket{.}$, where the antisymmetric ``wedge'' product
is implied when two vectors are multiplied, i.e., $\fket{.}\fket{.}=\fket{.}\wedge\fket{.}$. Furthermore, let us use the
common ``occupation number'' notation and write $1_{n}$ instead of $\psi_{n}$ to denote an excitation in the mode $n$. Finally,
we omit the symbol for the direct sum and simply keep in mind that states with different numbers of excitations occupy different
sectors of the fermionic Fock space. With this convention in mind we can rewrite Eq.~(\ref{eq:fermionic Fock space general state}) as
\begin{align}
    \fket{\Psi}    &=\,
    \mu_{0}\fket{0}\,+\,\sum\limits_{i=1}^{n}\mu_{i}\fket{\!1_{i}\!}
    \label{eq:fermionic Fock space general state rewritten}\\[1.0mm]
    &+\
    \sum\limits_{j,k}\mu_{jk}\fket{\!1_{j}\!}\fket{\!1_{k}\!}\,+\,\ldots\ .
    \nonumber
\end{align}
For the adjoint space we use the convention [compare to Eq.~(\ref{eq:two fermion state definition})]
\begin{align}
    \fbra{\!1_{n}\!}\fbra{\!1_{m}\!}    &:=\,
    \fbra{0}b_{n}b_{m}\,=\,\Bigl(b_{m}^{\dagger}b_{n}^{\dagger}\,\fket{0}\Bigr)^{\dagger}
    \label{eq:adjoint space convention}\\[1.0mm]
    &\ =\,-\,
    \tfrac{1}{\sqrt{2}}\Bigl(\bra{\!\psi_{n}\!}\otimes\bra{\!\psi_{m}\!}\,-\,\bra{\!\psi_{m}\!}\otimes\bra{\!\psi_{n}\!}\Bigr)\,,
    \nonumber
\end{align}
which allows us to write
\begin{align}
    \fbra{\!1_{m}\!}\fscpr{1_{n}}{1_{i}}\fket{\!1_{j}\!}    &=\,
    \delta_{ni}\delta_{mj}\,-\,\delta_{nj}\delta_{mi}\,.
    \label{eq:fermion state normalization}
\end{align}
This notation is more convenient for computations in the fermionic Fock space. It should be noted that, in standard quantum information notation,
the position of a ``ket'' corresponds to a particular ordering of the subspaces with respect to the tensor product structure of the total space.
Here, however, there is no tensor product structure corresponding to different modes according to which the vectors $\fket{.}$ can be naturally ordered.
%such that $\scpr{\!\psi_{m},\psi_{n}\!}{\!\psi_{i},\psi_{j}\!}=(\delta_{})$

\section{Density operators in the fermionic Fock space}\label{sec:fermionic density operators}

In complete analogy to the usual case of mixed states on tensor product spaces we can now construct incoherent mixtures of pure state in a
fermionic Fock space. Let us first consider the projector on the state $\fket{\Psi}$ from Eq.~(\ref{eq:fermionic Fock space general state rewritten}), i.e.,
\begin{align}
    \fket{\Psi}\!\fbra{\Psi}  &=\,
    |\mu_{0}|^{2}\fket{0}\!\fbra{0}\,+\,
    \sum\limits_{i,i\pr}\mu_{i}\,\mu_{i\pr}^{*}\,\fket{\!1_{i}\!}\!\fbra{\!1_{j}\!}
    \nonumber\\[1.0mm]
    &\ +\,\sum\limits_{j,j\pr,k,k\pr}\mu_{jk}\,\mu_{j\pr k\pr}^{*}\,\fket{\!1_{j}\!}\fket{\!1_{k}\!}\!\fbra{\!1_{j\pr}\!}\fbra{\!1_{k\pr}\!}
    \nonumber\\[1.0mm]
    &\ +\,
    \sum\limits_{i}\Bigl(\mu_{i}\,\mu_{0}^{*}\,\fket{\!1_{i}\!}\!\fbra{0}\,+\,\mathrm{H.c.}\Bigr)\,+\,\ldots\, .
    \label{eq:projector on general fermion state}
\end{align}
Let us check that such an object satisfies the criteria for a density operator:
\begin{enumerate}[(i)]
\item{It can be immediately noticed that~(\ref{eq:projector on general fermion state}) provides a \emph{Hermitean} operator.}
\item{The \emph{normalization}, i.e., $\tr\bigl(\fket{\Psi}\!\fbra{\Psi}\bigr)=1$, is guaranteed by the normalization of $\fket{\Psi}$. In other words,
the trace of~(\ref{eq:projector on general fermion state}) is well defined and independent of the chosen (complete, orthonormal) basis in $\bar{\mathcal{F}}$.}
\item{\emph{Positivity}: Finally, the eigenvalues of $\fket{\Psi}\!\fbra{\Psi}$ are well defined, i.e.,~(\ref{eq:projector on general fermion state}) can be represented
as a diagonal matrix with diagonal entries $\{1,0,0,\ldots\}$, which clearly is a positive semidefinite spectrum.}
\end{enumerate}
We can then simply form incoherent mixtures of such pure states using convex sums, i.e.,
\begin{align}
    \varrho &=\,\sum\limits_{n}p_{n}\fket{\Psi_{n}}\!\fbra{\Psi_{n}}\,
    \label{eq:fermionic mixed states}
\end{align}
where $\sum_{n}p_{n}=1$, to construct the elements of the \emph{Hilbert--Schmidt space} $\mathcal{H}_{S}(\bar{\mathcal{F}})$ over the fermionic Fock space. Properties $(i)$ and $(ii)$ can trivially be seen to be satisfied for such \emph{mixed} states. The positivity of~(\ref{eq:fermionic mixed states}), however,
requires some additional comments. The operator $\varrho$ can be diagonalized by a unitary transformation $U$ on $\bar{\mathcal{F}}$, which in turn can be
constructed from exponentiation of Hermitean or anti-Hermitean operators formed from algebra elements $b_{n}$ and $b_{m}^{\dagger}$. Operationally this procedure is rather elaborate. A simpler approach is the diagonalization of a matrix representation of $\varrho$. As we shall see in Sec.~\ref{sec:the fermionic ambiguity}, the matrix representation of $\varrho$ is not unique, but all possible representations $\pi_{i}(\varrho)$ are unitarily equivalent, such that their
eigenvalues all coincide with those of $\varrho$, i.e.,
\begin{align}
    \spectr\bigl(\pi_{i}(\varrho)\bigr) &=\,\spectr\bigl(\varrho\bigr)\ \ \forall\,i\,.
    \label{eq:spectrum of fermionic density matrix representation}
\end{align}

\section{The fermionic ambiguity}\label{sec:the fermionic ambiguity}

Let us now turn to the apparent ambiguity in such fermionic systems when quantum information tasks are considered. It was pointed out
in Ref.~\cite{MonteroMartin-Martinez2011b} that the anticommutation relations~(\ref{eq:anticomm relations}) do not suggest a natural
choice for the basis vectors of the fermionic Fock space for the multiparticle sector, i.e., for two fermions in the modes $m$ and $n$,
either
\begin{align}
    \fket{\!1_{m}\!}\fket{\!1_{n}\!}\ \ \ \mbox{or}\ \ \ \fket{\!1_{n}\!}\fket{\!1_{m}\!}\,=\,-\,\fket{\!1_{m}\!}\fket{\!1_{n}\!}
    \label{eq:fermionic ambiguity basis}
\end{align}
could be used to represent the physical state. This becomes of importance when we try to map the states in a fermionic $n$-mode Fock space to vectors in an $n$-fold tensor product space, i.e.,
\begin{subequations}
\label{eq:fermionic qubit mapping}
\begin{align}
    \pi_{i}:\ \bar{\mathcal{F}}_{n}\ &\longrightarrow\ \mathcal{H}_{1}\otimes\ldots\otimes\mathcal{H}_{n}\,
    \label{eq:fermionic qubit mapping spaces}\\[1.5mm]
    \fket{\psi}\ &\stackrel{\pi_{i}}{\longmapsto}\ \ket{\psi_{(i)}}
    \label{eq:fermionic qubit mapping pure states}\\[1.5mm]
    \varrho\ &\stackrel{\pi_{i}}{\longmapsto}\ \pi_{i}(\varrho)
    \label{eq:fermionic qubit mapping mixed states}
\end{align}
\end{subequations}
where the spaces $\mathcal{H}_{i}=\mathbb{C}^{2}$~$(i=1,\ldots,n)$ are identical, single-qubit Hilbert spaces. The mappings $\pi_{i}$ are unitary, i.e., $\fscpr{\phi}{\psi}=\scpr{\phi_{(i)}}{\psi_{(i)}}$ and $\tr(\varrho\sigma)=\tr(\pi_{i}(\varrho)\pi_{i}(\sigma))$. This implies that the maps $\pi_{i}$ for different $i$ are unitarily equivalent. In particular, the different matrix representations $\pi_{i}(\varrho)$ are related by multiplication of selected rows and columns of the matrix by $(-1)$.

In the language of quantum information theory the states $\psi_{(i)}$ are related by \emph{global unitary} transformations. It thus becomes apparent that the entanglement of $\pi_{i}(\varrho)$ with respect to a bipartition
\begin{align}
\mathcal{H}_{\mu_{1}}\otimes\ldots\otimes\mathcal{H}_{\mu_{m}}|\mathcal{H}_{\mu_{m+1}}\otimes\ldots\otimes\mathcal{H}_{\mu_{n}}
    \label{eq:bipartition}
\end{align}
will generally depend on the chosen mapping.

Clearly, this is an unfavorable situation, but the inequivalence of entanglement measures for different such mappings has been noted before (see, e.g., Refs.~\cite{BoteroReznik2004,CabanPodlaskiRembielinskiSmolinskiWalczak2005,BradlerJauregui2012}), while other investigations~\cite{FriisLeeBruschiLouko2012,FriisBruschiLoukoFuentes2012,FriisHuberFuentesBruschi2012} did not suffer from any problems due to this ambiguity. Recently, the authors of Ref.~\cite{BradlerJauregui2012} suggested that the ambiguity can be resolved by restrictions imposed by charge superselection rules, while Refs.~\cite{MonteroMartin-Martinez2011b,MonteroMartin-Martinez2012b} suggested a solution by enforcing a particular operator ordering. We will discuss both of these approaches in Sec.~\ref{sec:partial trace ambiguity}, where we present simple and physically intuitive criteria for quantum information processing on a fermionic Fock space.
Most importantly, we will show in Secs.~\ref{sec:partial trace ambiguity} and~\ref{sec:fermionic entanglement beyond 2 modes} that mappings of the type of (\ref{eq:fermionic qubit mapping}) can only be considered to be consistent when limiting the analysis to two fermionic modes obeying charge superselection, but not beyond this regime.

\section{The partial trace ambiguity}\label{sec:partial trace ambiguity}

While the sign ambiguity in the sense of the different mappings $\pi_{i}$ is the superficial cause of the issue, we want to
discuss now a separate, and in some sense more fundamental problem: partial traces over ``mode subspaces." We are interested
in the entanglement between modes of a fermionic quantum field. However, in the structure of the Fock space, there is no
tensor product decomposition into Hilbert spaces for particular modes [see, e.g., Eq.~(\ref{eq:two fermion state definition})].
Only a tensor product structure with respect to individual fermions is available, but since the particles are indistinguishable,
the entanglement between two particles in this sense has to be defined very carefully~\cite{WisemanVaccaro2003}. This issue is not
unique for fermions and is sometimes referred to as ``fluffy bunny'' entanglement (see Ref.~\cite{WisemanBartlettVaccaro2003}).

For the decomposition into different modes we only have a wedge product structure available. In Ref.~\cite{BradlerJauregui2012}
the authors suggest that entanglement should be considered with respect to this special case of the ``braided tensor product."
As far as the construction of the density operators with respect to such a structure is concerned, we agree with this view (see
Sec.~\ref{sec:fermionic density operators}), and no ambiguities arise regarding the description of the total $n$-mode system.
However, the crucial problem lies in the definition of the partial tracing over a subset of the $n$~modes. This is best
illustrated for a simple example: Consider a system of two fermionic modes labelled $\kappa$ and $\kappa\pr$. A general, mixed
state of these two modes can be written as
\begin{align}
    \varrho_{\kappa\kappa\pr}   &=\,
    \alpha_{1}\,\fket{0}\!\fbra{0}\,+\,
    \alpha_{2}\,\fket{\!1_{\kappa\pr}\!}\!\fbra{\!1_{\kappa\pr}\!}
    \label{eq:general two-mode fermion state}\\[1.5mm]
    &\ +\,
    \alpha_{3}\,\fket{\!1_{\kappa}\!}\!\fbra{\!1_{\kappa}\!}\,+\,
    \alpha_{4}\,\fket{\!1_{\kappa}\!}\fket{\!1_{\kappa\pr}\!}\!\fbra{\!1_{\kappa\pr}\!}\fbra{\!1_{\kappa}\!}
    \nonumber\\[1.0mm]
    &\ +\,
    \Bigl(
    \beta_{1}\,\fket{0}\!\fbra{\!1_{\kappa\pr}\!}\,+\,
    \beta_{2}\,\fket{0}\!\fbra{\!1_{\kappa}\!}
    \nonumber\\[1.0mm]
    &\ +\,
    \beta_{3}\,\fket{0}\!\fbra{\!1_{\kappa\pr}\!}\fbra{\!1_{\kappa}\!}\,+\,
    \beta_{4}\,\fket{\!1_{\kappa\pr}\!}\!\fbra{\!1_{\kappa}\!}
    \nonumber\\[1.5mm]
    &\ +\,
    \beta_{5}\,\fket{\!1_{\kappa\pr}\!}\!\fbra{\!1_{\kappa\pr}\!}\fbra{\!1_{\kappa}\!}\,+\,
    \beta_{6}\,\fket{\!1_{\kappa}\!}\!\fbra{\!1_{\kappa\pr}\!}\fbra{\!1_{\kappa}\!}
    \nonumber\\[1.0mm]
    &\ +\,\mathrm{H.c.}\,\Bigr)\,,
    \nonumber
\end{align}
where appropriate restrictions on the coefficients $\alpha_{i}\in\mathbb{R}$ and $\beta_{j}\in\mathbb{C}$ apply to ensure the positivity
and normalization of $\varrho_{\kappa\kappa\pr}$. Here we have, for now, disregarded superselection rules. Let us now determine
the corresponding reduced density operators (on the Fock space) for the individual modes $\kappa$ and $\kappa\pr$. Usually one would
select a basis of the subsystem that is being traced over, e.g., for tracing over mode $\kappa\pr$ one could choose
$\{\fket{0},\fket{\!1_{\kappa\pr}\!}\}$. This clearly cannot work since basis vectors with different numbers of excitations are orthogonal. We thus have to define the partial trace in a different way. This is equally true for bosonic fields as well. However, in contrast to the fermionic case, no ambiguities arise in such a redefinition for bosonic fields. For the diagonal elements of the reduced fermionic states the redefinition of the partial trace is straightforward as well. These elements are obtained from
\begin{subequations}
\label{eq:partial trace diagonal elements}
\begin{align}
    \tr_{m}\bigl(\fket{0}\!\fbra{0}\bigr) &:=\,\fket{0}\!\fbra{0}\,,
    \label{eq:partial trace diagonal element vacuum}\\[1.0mm]
    \tr_{m}\bigl(\fket{\!1_{n}\!}\!\fbra{\!1_{n}\!}\bigr) &:=\,\delta_{mn}\fket{\!1_{n}\!}\!\fbra{\!1_{n}\!}
    \label{eq:partial trace diagonal element single particle}\\[1.0mm]
    &\ +\,(1-\delta_{mn})\fket{0}\!\fbra{0}\,,
    \nonumber\\[1.0mm]
    \tr_{m}\bigl(\fket{\!1_{m}\!}\fket{\!1_{n}\!}\!\fbra{\!1_{n}\!}\fbra{\!1_{m}\!}\bigr) &:=\,\fket{\!1_{n}\!}\!\fbra{\!1_{n}\!}\
    %\ \ \mbox{for}\
    (m\neq n)\,,
    \label{eq:partial trace diagonal element two particles}
\end{align}
\end{subequations}
where $n,m=\kappa,\kappa\pr$. While the diagonal elements are unproblematic and do not suffer from any ambiguities, we have to be more careful with the off-diagonal elements. Three of these will not contribute, i.e.,
\begin{align}
    \tr_{m}\bigl(\fket{\!1_{m}\!}\!\fbra{\!1_{n}\!}\bigr) &=\,
    \tr_{m}\bigl(\fket{0}\!\fbra{\!1_{m}\!}\fbra{\!1_{n}\!}\bigr)
    \label{eq:partial trace offdiagonal elements vanishing}\\[1.5mm]
    &=\,
    \tr_{m}\bigl(\fket{\!1_{n}\!}\!\fbra{\!1_{m}\!}\fbra{\!1_{n}\!}\bigr)\,=\,0\,,
    \nonumber
\end{align}
and two more are unproblematic as well, i.e.,
\begin{align}
    \tr_{m}\bigl(\fket{0}\!\fbra{\!1_{n}\!}\bigr) &:=\,(1-\delta_{mn})\fket{0}\!\fbra{\!1_{n}\!}\,.
    \label{eq:partial trace offdiagonal elements trivial}
\end{align}
The last element,
\begin{align}
    \tr_{m}\bigl(\fket{\!1_{m}\!}\!\fbra{\!1_{m}\!}\fbra{\!1_{n}\!}\bigr)   &=\,
    -\,\tr_{m}\bigl(\fket{\!1_{m}\!}\!\fbra{\!1_{n}\!}\fbra{\!1_{m}\!}\bigr)
    \nonumber\\[1.5mm]
    &=\,\pm\,\fket{0}\!\fbra{\!1_{n}\!}\,,
    \label{eq:partial trace offdiagonal ambiguous}
\end{align}
however, presents an ambiguity. If a mapping $\pi_{i}$ to a two-qubit Hilbert space is performed, the choice of map will
determine the corresponding sign in the partial trace over either of the qubits. The differences in entanglement related
to the fact that $\pi_{i}(\varrho)$ and $\pi_{j}(\varrho)$ are related by a global unitary are thus explained by the
relative sign between the contributions of Eq.~(\ref{eq:partial trace offdiagonal elements trivial}) and
Eq.~(\ref{eq:partial trace offdiagonal ambiguous}) to the same element of the reduced density matrix.

However, simple \emph{physical requirements} restrict the choice in this relative sign. Any reduced state formalism has
to satisfy the simple criterion that the reduced density operator contains all the information about the subsystem that
can be obtained from the global state when measurements are performed only on the respective subsystem alone.

Let us put this statement in more mathematical terms. For any bipartition $A|B$ of a Hilbert space~$\mathcal{H}$ (with respect to any
braided tensor product structure on~$\mathcal{H}$) and any state $\rho\in\mathcal{H}$ the partial trace operation $\tr_{B}$ must satisfy
\begin{align}
    \left\langle\,\mathcal{O}_{n}(A)\,\right\rangle_{\rho}  &=\,\left\langle\,\mathcal{O}_{n}(A)\,\right\rangle_{\tr_{B}(\rho)}\,,
    \label{eq:consistency condition}
\end{align}
where $\left\langle\mathcal{O}\right\rangle_{\rho}$ denotes the expectation value of the operator $\mathcal{O}$ in the state $\rho$
and $\{\mathcal{O}_{n}(A)\}$ is the set of all (Hermitean) operators that act on the subspace $A$ only. For the operator
$\varrho_{\kappa\kappa\pr}$ from Eq.~(\ref{eq:general two-mode fermion state}) the condition~(\ref{eq:consistency condition}) can be written as
\begin{align}
    \tr\bigl(\mathcal{O}_{n}(\kappa)\varrho_{\kappa\kappa\pr}\bigr)  &=\,
    \tr\bigl(\mathcal{O}_{n}(\kappa)\varrho_{\kappa}\bigr)\,,
    \label{eq:consistency condition 2 modes}
\end{align}
where $\varrho_{\kappa}=\tr_{\kappa\pr}(\varrho_{\kappa\kappa\pr})$. This consistency condition uniquely determines the relative signs
between different contributions to the same elements of $\varrho_{\kappa}$. Let us consider the (Hermitean) operators
$(b_{\kappa}+b_{\kappa}^{\dagger})$ and $i(b_{\kappa}-b_{\kappa}^{\dagger})$. Their expectation values for the global state
$\varrho_{\kappa\kappa\pr}$ are given by
\begin{subequations}
\label{eq:consistencty condition global exp values 2 modes kappa}
    \begin{align}
    \tr\bigl(\,(b_{\kappa}+b_{\kappa}^{\dagger})\varrho_{\kappa\kappa\pr}\,\bigr)   &=\,
    2\,\mathrm{Re}(\,\beta_{2}+\beta_{5}\,)\,,
    \label{eq:consistencty condition global exp values 2 modes kappa x}\\[1.0mm]
    \tr\bigl(\,i(b_{\kappa}-b_{\kappa}^{\dagger})\varrho_{\kappa\kappa\pr}\,\bigr)   &=\,
    2\,\mathrm{Im}(\,\beta_{2}+\beta_{5}\,)\,.
    \label{eq:consistencty condition global exp values 2 modes kappa p}
\end{align}
\end{subequations}
For the mode $\kappa\pr$, on the other hand, we compute
\begin{subequations}
\label{eq:consistencty condition global exp values 2 modes kappa prime}
    \begin{align}
    \tr\bigl(\,(b_{\kappa\pr}+b_{\kappa\pr}^{\dagger})\varrho_{\kappa\kappa\pr}\,\bigr)   &=\,
    2\,\mathrm{Re}(\,\beta_{1}-\beta_{6}\,)\,,
    \label{eq:consistencty condition global exp values 2 modes kappa pr x}\\[1.0mm]
    \tr\bigl(\,i(b_{\kappa\pr}-b_{\kappa\pr}^{\dagger})\varrho_{\kappa\kappa\pr}\,\bigr)   &=\,
    2\,\mathrm{Im}(\,\beta_{1}-\beta_{6}\,)\,.
    \label{eq:consistencty condition global exp values 2 modes kappa pr p}
    \end{align}
\end{subequations}
Equations~(\ref{eq:consistencty condition global exp values 2 modes kappa}) and
(\ref{eq:consistencty condition global exp values 2 modes kappa prime}) determine the sign in
Eq.~(\ref{eq:partial trace offdiagonal ambiguous}) and we find the reduced states
\begin{subequations}
\label{eq:reduced states for kappa and kappa prime}
    \begin{align}
        \varrho_{\kappa}\,=\,\tr_{\kappa\pr}\bigl(\varrho_{\kappa\kappa\pr}\bigr) &=\,
        (\alpha_{1}+\alpha_{2})\,\fket{0}\!\fbra{0}
        \label{eq:reduced state for kappa}\\[1.0mm]
        &+\,
        (\alpha_{3}+\alpha_{4})\,\fket{\!1_{\kappa}\!}\!\fbra{\!1_{\kappa}\!}
        \nonumber\\[1.0mm]
        &+\,
        \Bigl((\beta_{2}+\beta_{5})\,\fket{0}\!\fbra{\!1_{\kappa}\!}\,+\,\mathrm{H.c.}\Bigr)\,
        \nonumber\\[1.0mm]
        \varrho_{\kappa\pr}\,=\,\tr_{\kappa}\bigl(\varrho_{\kappa\kappa\pr}\bigr) &=\,
        (\alpha_{1}+\alpha_{3})\,\fket{0}\!\fbra{0}\label{eq:reduced state for kappa}\\[1.0mm]
        &+\,
        (\alpha_{2}+\alpha_{4})\,\fket{\!1_{\kappa\pr}\!}\!\fbra{\!1_{\kappa\pr}\!}
        \nonumber\\[1.0mm]
        &+\,\Bigl((\beta_{1}-\beta_{6})\,\fket{0}\!\fbra{\!1_{\kappa\pr}\!}\,+\,\mathrm{H.c.}\Bigr)
        \nonumber
    \end{align}
\end{subequations}
for the modes $\kappa$ and $\kappa\pr$, respectively. Notice that this formally corresponds to tracing \emph{``inside out,"}
that is, first (anti-)commuting operators towards the projector on the vacuum state before removing them, such that
\begin{align}
    \tr_{m}\bigl(b_{m}^{\dagger}\fket{0}\!\fbra{0}b_{m}b_{n}\bigr)   &=\,
    \fket{0}\!\fbra{\!1_{n}\!}\,.
    \label{eq:partial trace offdiagonal ambiguous inside out}
\end{align}
We have now arrived at a point where we can make a general statement about the consistency conditions. Let us formulate
this in the following theorem.

%\textbf{Theorem:}
\begin{theorem}
Given a density operator $\varrho_{1,\ldots,n}\in\mathcal{H}_{S}(\bar{\mathcal{F}}_{n})$ for $n$~fermionic modes
(labelled $1,\ldots,n$) the consistency conditions~\rm{(\ref{eq:consistency condition})} completely determine the reduced
states on $\mathcal{H}_{S}(\bar{\mathcal{F}}_{m})$ for any $m$ with $1<m<n$.
\label{theorem:consistency condition}
\end{theorem}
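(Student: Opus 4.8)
The plan is to recast the consistency condition~(\ref{eq:consistency condition}) as a statement of linear algebra in the Hilbert--Schmidt space $\mathcal{H}_{S}(\bar{\mathcal{F}}_{m})$ and to show that it pins down a unique element. Fix the bipartition by letting $A$ denote the $m$~modes to be retained and $B$ the remaining $n-m$ modes to be traced out, so that the sought reduced operator $\varrho_{A}=\tr_{B}(\varrho_{1,\ldots,n})$ must lie in $\mathcal{H}_{S}(\bar{\mathcal{F}}_{m})$. Since $\bar{\mathcal{F}}_{m}$ is $2^{m}$-dimensional, this is a finite-dimensional space, and the Hilbert--Schmidt pairing $\langle X,Y\rangle=\tr(X^{\dagger}Y)$ restricts to a non-degenerate real inner product on its Hermitean part. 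Rewriting~(\ref{eq:consistency condition}) for the retained block gives, for every Hermitean operator $\mathcal{O}_{n}(A)$ built from $\{b_{k},b_{k}^{\dagger}:k\in A\}$,
\begin{align}
    \tr\!\bigl(\mathcal{O}_{n}(A)\,\varrho_{A}\bigr) \,=\, \tr\!\bigl(\mathcal{O}_{n}(A)\,\varrho_{1,\ldots,n}\bigr)\,,
    \nonumber
\end{align}
whose right-hand side is a fixed real number determined entirely by the given global state. Thus the consistency condition prescribes the value of the Hilbert--Schmidt pairing of the unknown $\varrho_{A}$ against \emph{each} admissible observable $\mathcal{O}_{n}(A)$.

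The central step is to show that these admissible observables span the real vector space of all Hermitean operators on $\bar{\mathcal{F}}_{m}$. To this end I would pass to the Majorana combinations $c_{2k-1}=b_{k}+b_{k}^{\dagger}$ and $c_{2k}=i(b_{k}-b_{k}^{\dagger})$ for $k\in A$, which are Hermitean, act on the retained modes only, and satisfy the Clifford relations $\anticomm{c_{p}}{c_{q}}=2\delta_{pq}\id$. The $4^{m}$ ordered products of distinct $c_{p}$ furnish a complete operator basis for the $2^{m}$-dimensional Fock sector $\bar{\mathcal{F}}_{m}$ (a Jordan--Wigner type argument~\cite{BanulsCiracWolf2007}); multiplying each product by an appropriate phase makes it Hermitean, so that real linear combinations of such products exhaust the Hermitean operators on $\bar{\mathcal{F}}_{m}$. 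Consequently $\{\mathcal{O}_{n}(A)\}$ is a spanning set, and by non-degeneracy of the pairing the prescribed values determine $\varrho_{A}$ uniquely --- exactly as the pairs~(\ref{eq:consistencty condition global exp values 2 modes kappa}) and~(\ref{eq:consistencty condition global exp values 2 modes kappa prime}) fix the single off-diagonal coefficient in each reduced state of the two-mode example.

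To complete the argument one must also exhibit an operator realizing the prescribed functional and verify that it is a genuine density operator. Existence is automatic in finite dimension: by non-degeneracy the functional $\mathcal{O}_{n}(A)\mapsto\tr(\mathcal{O}_{n}(A)\varrho_{1,\ldots,n})$ is represented by a unique Hermitean $\varrho_{A}$, which coincides with the ``inside-out'' partial trace of~(\ref{eq:partial trace offdiagonal ambiguous inside out}). Setting $\mathcal{O}_{n}(A)=\id$ yields $\tr(\varrho_{A})=\tr(\varrho_{1,\ldots,n})=1$, and positivity follows by writing any positive $P$ on $\bar{\mathcal{F}}_{m}$ as $P=\mathcal{O}^{\dagger}\mathcal{O}$ with $\mathcal{O}$ an $A$-operator, so that $\tr(P\varrho_{A})=\tr(\mathcal{O}^{\dagger}\mathcal{O}\,\varrho_{1,\ldots,n})=\tr(\mathcal{O}\,\varrho_{1,\ldots,n}\,\mathcal{O}^{\dagger})\ge 0$ by cyclicity and positivity of the global state.

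The hardest part is the spanning claim for the fermionic operators, precisely because of the grading that produced the original sign ambiguity: the number-changing observables such as $b_{k}+b_{k}^{\dagger}$ are odd and anticommute with operators on $B$, so one must check that their global expectation values reproduce the reduced coherences between sectors of different fermion number with the correct relative signs --- the very signs left open in~(\ref{eq:partial trace offdiagonal ambiguous}). Establishing that the Clifford products indeed close into the full operator algebra on $\bar{\mathcal{F}}_{m}$, and that this is compatible with the grading distinguishing the physical, sign-fixed partial trace from the ambiguous mappings $\pi_{i}$, is where the real work lies. Once it is in place, uniqueness of $\varrho_{A}$ for every $m$ with $1<m<n$ is an immediate corollary of the non-degeneracy of the trace pairing.
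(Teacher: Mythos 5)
Your argument is correct, but it proceeds along a genuinely different route from the paper's proof. The paper works element by element: it observes that each matrix element of the reduced state, written in the form of Eq.~(\ref{eq:general matrix element n-1 mode state}), can receive contributions from at most the two global elements~(\ref{eq:contributions to n-1 modes element from n mode state}), and then exhibits for each such element a dedicated pair of quadrature-like Hermitean operators~(\ref{eq:general element consistency condition operators}), supported on the symmetric difference of the index sets, whose expectation values fix the relative sign in $\gamma=\gamma_{0}\pm\gamma_{1}$; the general reduced state is then reached by iterating the single-mode trace and noting that the order of tracing is immaterial. You instead make one global duality argument: the admissible $A$-observables span the entire Hermitean part of the operator algebra on $\bar{\mathcal{F}}_{m}$ (via the Majorana monomials), so non-degeneracy of the Hilbert--Schmidt pairing fixes $\varrho_{A}$ in a single step for an arbitrary bipartition, and normalization and positivity drop out as corollaries --- facts the paper only asserts after its proof. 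The paper's route buys an explicit, operational sign-fixing rule (the ``inside-out'' prescription of Eq.~(\ref{eq:partial trace offdiagonal ambiguous inside out})); yours buys a cleaner uniqueness statement that needs neither the mode-by-mode iteration nor the bookkeeping of which global elements contribute. The one ingredient you defer --- linear independence of the $4^{m}$ Majorana monomials --- is standard, since distinct monomials are mutually orthogonal under the trace pairing, so your self-identified ``hardest part'' is not a genuine gap for uniqueness; it would only be needed to verify that the explicit inside-out formula realizes the prescribed functional, which your existence argument via the Riesz representation sidesteps.
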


\begin{proof}
This can be seen in the following way: for any matrix element
\begin{align}
\gamma\,b^{\dagger}_{\mu_{1}}\ldots b^{\dagger}_{\mu_{i}}\fket{0}\!\fbra{0}b_{\nu_{1}}\ldots b_{\nu_{j}}
\label{eq:general matrix element n-1 mode state}
\end{align}
of an $(n-1)$-mode reduced state $\varrho_{1,\ldots,(n-1)}=\tr_{n}(\varrho_{1,\ldots,n})$, where $\gamma\in\mathbb{C}$ and the sets
\begin{subequations}
\label{eq:n-1 mode subsets}
    \begin{align}
        \mu\,:=\,\{\mu_{1},\ldots,\mu_{i}\}  &\subseteq\,\{1,2,\ldots,(n-1)\}
        \label{eq:n-1 mode subset mu}\\[1.5mm]
    \mbox{and}\ \  \nu\,:=\,\{\nu_{1},\ldots,\nu_{j}\}   &\subseteq\,\{1,2,\ldots,(n-1)\}
        \label{eq:n-1 mode subset nu}
    \end{align}
\end{subequations}
label subsets of the mode operators for the $(n-1)$ modes, can have contributions from at most two matrix elements of $\varrho_{1,\ldots,n}$, i.e.,
\begin{subequations}
\label{eq:contributions to n-1 modes element from n mode state}
    \begin{align}
        &\ \tr_{n}\bigl(\gamma_{0}\,b^{\dagger}_{\mu_{1}}\ldots b^{\dagger}_{\mu_{i}}
            \fket{0}\!\fbra{0}b_{\nu_{1}}\ldots b_{\nu_{j}}\bigr)\,,
        \label{eq:contribution no particle to n-1 modes element from n mode state}\\[1.5mm]
    \mbox{and}\ \
        &\ \tr_{n}\bigl(\gamma_{1}\,b^{\dagger}_{\mu_{1}}\ldots b^{\dagger}_{\mu_{i}}b^{\dagger}_{n}
            \fket{0}\!\fbra{0}b_{n}b_{\nu_{1}}\ldots b_{\nu_{j}}\bigr)\,.
        \label{eq:contribution one particle to n-1 modes element from n mode state}
    \end{align}
\end{subequations}
The composition of $\gamma$ into $\gamma_{1}\in\mathbb{C}$ and $\gamma_{2}\in\mathbb{C}$, i.e., $\gamma=\gamma_{0}\pm\gamma_{1}$, is determined by the
consistency conditions of Eq.~(\ref{eq:consistency condition}). For every matrix element~(\ref{eq:general matrix element n-1 mode state}) with corresponding partial trace contributions from (\ref{eq:contributions to n-1 modes element from n mode state}) there exists a pair of Hermitean operators
\begin{subequations}
\label{eq:general element consistency condition operators}
\begin{align}
\mathcal{O}_{x}(\lambda,\tau)   &=\,b_{\lambda_{1}}\ldots b_{\lambda_{k}}b^{\dagger}_{\tau_{1}}\ldots b^{\dagger}_{\tau_{l}}
    \label{eq:general element consistency condition operator x}\\[1.0mm]
    &\ +\,
    b_{\tau_{l}}\ldots b_{\tau_{1}}b^{\dagger}_{\lambda_{k}}\ldots b^{\dagger}_{\lambda_{1}}\,,
    \nonumber\\[1.5mm]
\mathcal{O}_{p}(\lambda,\tau)   &=\,b_{\lambda_{1}}\ldots b_{\lambda_{k}}b^{\dagger}_{\tau_{1}}\ldots b^{\dagger}_{\tau_{l}}
    \label{eq:general element consistency condition operator p}\\[1.0mm]
    &\ -i\,
    b_{\tau_{l}}\ldots b_{\tau_{1}}b^{\dagger}_{\lambda_{k}}\ldots b^{\dagger}_{\lambda_{1}}\,,
    \nonumber
\end{align}
\end{subequations}
with $\lambda:=\{\lambda_{1},\ldots,\lambda_{k}\}=\mu/\nu$ and $\tau:=\{\tau_{1},\ldots,\tau_{l}\}=\nu/\mu$, that
uniquely determine the relative sign of $\gamma_{1}$ and $\gamma_{2}$.
These operators are unique up to an overall multiplication with scalars.
The tracing procedure can be repeated when any other of the $(n-1)$ remaining modes
are traced over. Since the order of the partial traces is of no importance for the final reduced state, all reduced
density operators are completely determined.
\end{proof}
Consequently, the reduced density matrices
in the fermionic Fock space can be considered as proper density operators, i.e., they are Hermitean, normalized, and
their eigenvalues are well defined and non-negative. Moreover, since the eigenvalues are free of ambiguities, all
functions of these eigenvalues, in particular, all entropy measures for density operators, are well defined. Also,
the operator ordering that was suggested in Ref.~\cite{MonteroMartin-Martinez2011b} is consistent with our consistency
condition.

Let us stress here that this analysis does not depend on any superselection rules that might be imposed in addition.
We will see how these enter the problem when mappings to qubits are attempted in Sec.~\ref{sec:entanglement of fermionic modes}.

\section{Entanglement of fermionic modes}\label{sec:entanglement of fermionic modes}

We are now in a position to reconsider a measure of entanglement between fermionic modes. We can define the entanglement of formation
$\bar{E}_{oF}$ for fermionic systems with respect to a chosen bipartition $A|B$ as
\begin{align}
    \bar{E}_{oF}(\varrho)    &=\,
    \min_{\{p_{n},\fket{\!\Psi_{n}\!}\}}\,\sum\limits_{n}\,p_{n}\,\mathcal{E}(\fket{\!\Psi_{n}\!})\,,
    \label{eq:fermionic entanglement of formation}
\end{align}
in complete analogy to the usual definition~\cite{BennettDiVincenzoSmolinWootters1996}. Here the minimum is taken over all pure
state ensembles $\fket{\!\Psi_{n}\!}$ that realize $\varrho$ according to
Eq.~(\ref{eq:fermionic mixed states}) and $\mathcal{E}(\fket{\!\Psi\!})$ denotes the entropy of entanglement of the
pure state $\fket{\!\Psi\!}$. Since the entropy of entanglement, e.g., using the von~Neumann entropy, is a function of the eigenvalues of the reduced states
$\tr_{B}\bigl(\fket{\!\Psi\!}\!\fbra{\!\Psi\!}\bigr)$ or $\tr_{A}\bigl(\fket{\!\Psi\!}\!\fbra{\!\Psi\!}\bigr)$ alone, we
can conclude that this is a well-defined quantity. As pointed out in Ref.~\cite{CabanPodlaskiRembielinskiSmolinskiWalczak2005},
the minimization in Eq.~(\ref{eq:fermionic entanglement of formation}) can be restricted to pure state decompositions that respect
superselection rules. Since this restriction limits the set of states over which the minimization is carried out, the quantity without this
restriction will be a lower bound to the ``physical'' entanglement of formation. For two fermionic modes the minimization over all states
that respect superselection rules can indeed be carried out (see Ref.~\cite{CabanPodlaskiRembielinskiSmolinskiWalczak2005}). However, in
general this step will be problematic.

Let us now turn to some operational entanglement measure, in particular, let us investigate if and how the \emph{negativity} $\mathcal{N}$ (see, e.g., Ref.~\cite{VidalWerner2001}) and the \emph{concurrence} $C$ (see, e.g., Ref.~\cite{BennettDiVincenzoSmolinWootters1996}) can be computed to
quantify fermionic mode entanglement. Both of these measures are operationally based on the tensor product structure of qubits. We will here
%\footnote{Other authors, see, e.g., Ref.~\cite{VerstraeteAudenaertDehaeneDeMoor2001}, define the negativity as twice the value of our Eq.~(\ref{eq:negativity}).}
define the negativity as
\begin{align}
    \mathcal{N} &:=\,\sum\limits_{i}\frac{(\lambda_{i}-|\lambda_{i}|)}{2}\,,
    \label{eq:negativity}
\end{align}
where $\lambda_{i}$ are the eigenvalues of the partially transposed density matrix. However, the partial transposition is a map that is well defined only for basis vectors on a tensor product space. To employ this measure, let us therefore try to find a mapping $\pi_{i}$ to such a tensor product structure that is consistent with the conditions of Eq.~(\ref{eq:consistency condition}). Starting with the two-mode state $\varrho_{\kappa\kappa\pr}$ of Eq.~(\ref{eq:general two-mode fermion state}), we are
looking for a map $\pi$ that takes $\{\fket{0},\fket{\!1_{\kappa}\!},\fket{\!1_{\kappa\pr}\!},\fket{\!1_{\kappa}\!}\fket{\!1_{\kappa\pr}\!}\}$ to
$\{\ket{00},\ket{01\!},\ket{\!10},\ket{\!11\!}\}$, where $\ket{mn}=\ket{m}\otimes\ket{n}\in\mathcal{H}_{\kappa}\otimes\mathcal{H}_{\kappa\pr}$, such that
\begin{subequations}
\label{eq:two fermion mode consistent mapping}
    \begin{align}
    \varrho &\ \longmapsto\    \pi(\varrho)\,,
    \label{eq:two fermion mode consistent mapping total state}\\[1.0mm]
    \varrho_{\kappa} &\ \longmapsto\    \pi(\varrho_{\kappa})\,=\,\tr_{\kappa\pr}\bigl(\pi(\varrho)\bigr)\,,
    \label{eq:two fermion mode consistent mapping total state mode kappa}\\[1.0mm]
    \varrho_{\kappa\pr} &\ \longmapsto\    \pi(\varrho_{\kappa\pr})\,=\,\tr_{\kappa}\bigl(\pi(\varrho)\bigr)\,.
    \label{eq:two fermion mode consistent mapping total state mode kappa prime}
    \end{align}
\end{subequations}
The condition for a consistent mapping can be represented in the following diagram:
\begin{eqnarray}
\varrho_{\kappa\kappa\pr}\                        &\  \stackrel{\pi}{\longmapsto}\    &\ \pi(\varrho_{\kappa\kappa\pr})
\nonumber\\[2.0mm]
\tr_{\kappa\pr}\ \downarrow\ \ \ &                                   &\ \ \ \downarrow\ \ \tr_{\kappa\pr}
\label{eq:graphical representation of consitent map}\\[2.0mm]
\varrho_{\kappa}          \  \    &\  \stackrel{\pi}{\longmapsto}\    &\ \pi(\varrho_{\kappa})
\nonumber
\end{eqnarray}
In other words, a mapping $\pi: \varrho\mapsto\pi(\varrho)$ from the space $\mathcal{H}_{S}(\bar{\mathcal{F}}_{2})$ to $\mathcal{H}_{\kappa}\otimes\mathcal{H}_{\kappa\pr}$ is considered to be consistent if it commutes with the partial trace operation. It is quite simple to check that these requirements generally cannot be met, i.e., writing $\varrho_{\kappa\kappa\pr}$ of Eq.~(\ref{eq:general two-mode fermion state}) as a matrix with respect to the basis $\{\fket{0},\fket{\!1_{\kappa}\!},\fket{\!1_{\kappa\pr}\!},\fket{\!1_{\kappa}\!}\fket{\!1_{\kappa\pr}\!}\}$ we get
\begin{align}
\varrho &=\,
    \begin{pmatrix}
        \alpha_{1}      &   \beta_{1}       &   \beta_{2}       &   \beta_{3}   \\
        \beta_{1}^{*}   &   \alpha_{2}      &   \beta_{4}       &   \beta_{5}   \\
        \beta_{2}^{*}   &   \beta_{4}^{*}   &   \alpha_{3}      &   \beta_{6}   \\
        \beta_{3}^{*}   &   \beta_{5}^{*}   &   \beta_{6}^{*}   &   \alpha_{4}
    \end{pmatrix}\,.
\label{eq:general two mode state matrix representation}
\end{align}
A mapping of the desired type should be obtained by multiplying any number of rows and the corresponding columns by $(-1)$ and considering the resulting
matrix as the representation $\pi(\varrho)$ on $\mathcal{H}_{\kappa}\otimes\mathcal{H}_{\kappa\pr}$. The desired result should have a relative
sign switch between $\beta_{1}$ and $\beta_{6}$, while the signs in front of $\beta_{2}$ and $\beta_{5}$ should be the same. This clearly is not
possible unless some of the coefficients vanish identically, e.g., by imposing superselection rules. For example, conservation of charge would require the coefficients $\beta_{1},\beta_{2},\beta_{5}$, $\beta_{6}$, and, depending on the charge of the modes $\kappa$ and $\kappa\pr$, either $\beta_{3}$ or $\beta_{4}$ to vanish identically. In this way only incoherent mixtures of pure states with different charge are allowed, but not coherent superpositions.

We thus find that \emph{two fermionic modes can only be consistently represented as two qubits when charge superselection is respected}. In that case only one off-diagonal element can be nonzero and the sign of this element is insubstantial, i.e., it does not influence the reduced states or the value of any entanglement measure. In particular, the results for entanglement generation and degradation between two fermionic modes presented in Refs.~\cite{FriisLeeBruschiLouko2012,FriisBruschiLoukoFuentes2012,FriisKoehlerMartinMartinezBertlmann2011} respect both charge superselection and the consistency conditions of Eq.~(\ref{eq:consistency condition}).

Let us return to the choice of entanglement measure for the permitted mappings to two qubits. We now restrict the entanglement of formation $\bar{E}_{oF}$ as defined in Eq.~(\ref{eq:fermionic entanglement of formation}) to states that obey charge superselection, as suggested in Ref.~\cite{CabanPodlaskiRembielinskiSmolinskiWalczak2005}. As discussed earlier, this means the usual entanglement of formation $E_{oF}$ provides a
lower bound to $\bar{E}_{oF}$, i.e.,
\begin{align}
E_{oF}  &\leq\,\bar{E}_{oF}\,.
\label{eq:lower bound to fermionic EoF}
\end{align}
For two qubits $E_{oF}=E_{oF}(C)$ is a monotonically increasing function of the concurrence $C$. We propose an analogous functional dependence
of $\bar{E}_{oF}=\bar{E}_{oF}(\bar{C})$ on a parameter $\bar{C}$, that we call ``fermionic concurrence." Evidently, the function $\bar{C}(\varrho)$
is an entanglement monotone that is bounded from below by the usual concurrence $C$. As shown in Ref.~\cite{VerstraeteAudenaertDehaeneDeMoor2001}, the
negativity $\mathcal{N}$ further provides a lower bound to the concurrence, i.e., in our convention of Eq.~(\ref{eq:negativity}), $2\mathcal{N}\leq C$. Consequently, the negativity provides a lower bound to $\bar{C}$, i.e.,
\begin{align}
2\mathcal{N}    &\leq\,C\,\leq\,\bar{C}\,.
\label{eq:negativity and ferm concurrence bounds}
\end{align}
For two modes it is thus at least possible to compute lower bounds to entanglement measures explicitly. It was suggested in Ref.~\cite{WisemanVaccaro2003}
that conventional entanglement measures overestimate the quantum correlations that can physically be extracted from fermionic systems. The operations that can be performed on each single-mode subsystem are limited by (charge) superselection as well. However, we conjecture that the inaccessible entanglement between the fermionic modes can always be swapped to two (uncharged) bosonic modes for which the local bases can be chosen arbitrarily.

\section{Fermionic entanglement beyond two modes}\label{sec:fermionic entanglement beyond 2 modes}

Finally, let us consider the entanglement between more than two fermionic modes. In principle, any measure
of entanglement that is based on entropies of the subsystems is well defined on the fermionic Fock space, as we
have discussed. However, we would like to employ operational measures. Let us therefore start by attempting a
consistent mapping from three fermionic modes to three qubits, in analogy to the two-mode case in
Sec.~\ref{sec:entanglement of fermionic modes}. For simplicity we assume that the modes $\kappa$, $\kappa\pr$, and
$\kappa\prpr$ all have equal charge such that the most general mixed state of these modes can be written as
\begin{align}
    \varrho_{\kappa\kappa\pr\kappa\prpr}   &=\,
    \mu_{1}\,\fket{0}\!\fbra{0}\,+\,
    \mu_{2}\,\fket{\!1_{\kappa\prpr}\!}\!\fbra{\!1_{\kappa\prpr}\!}
    \label{eq:general three-mode fermion state supseselection}\\[1.0mm]
    &\ +\,
    \mu_{3}\,\fket{\!1_{\kappa\pr}\!}\!\fbra{\!1_{\kappa\pr}\!}
    \,+\,
    \mu_{4}\,\fket{\!1_{\kappa\pr}\!}\fket{\!1_{\kappa\prpr}\!}\!\fbra{\!1_{\kappa\prpr}\!}\fbra{\!1_{\kappa\pr}\!}
    \nonumber\\[2.0mm]
    &\ +\,
    \mu_{5}\,\fket{\!1_{\kappa}\!}\!\fbra{\!1_{\kappa}\!}\,+\,
    \mu_{6}\,\fket{\!1_{\kappa}\!}\fket{\!1_{\kappa\prpr}\!}\!\fbra{\!1_{\kappa\prpr}\!}\fbra{\!1_{\kappa}\!}
    \nonumber\\[2.0mm]
    &\ +\,
    \mu_{7}\,\fket{\!1_{\kappa}\!}\fket{\!1_{\kappa\pr}\!}\!\fbra{\!1_{\kappa\pr}\!}\fbra{\!1_{\kappa}\!}
    \nonumber\\[1.5mm]
    &\ +\,
    \mu_{8}\,\fket{\!1_{\kappa}\!}\fket{\!1_{\kappa\pr}\!}\fket{\!1_{\kappa\prpr}\!}\!
        \fbra{\!1_{\kappa\prpr}\!}\fbra{\!1_{\kappa\pr}\!}\fbra{\!1_{\kappa}\!}
    \nonumber\\[1.5mm]
    &\ +\,
    \Bigl(\,
    \nu_{1}\,\fket{\!1_{\kappa\prpr}\!}\!\fbra{\!1_{\kappa\pr}\!}\,+\,
    \nu_{2}\,\fket{\!1_{\kappa\prpr}\!}\!\fbra{\!1_{\kappa}\!}
    \nonumber\\[1.5mm]
    &\ +\,
    \nu_{3}\,\fket{\!1_{\kappa\pr}\!}\!\fbra{\!1_{\kappa}\!}\,+\,
    \nu_{4}\,\fket{\!1_{\kappa\pr}\!}\fket{\!1_{\kappa\prpr}\!}\!\fbra{\!1_{\kappa\prpr}\!}\fbra{\!1_{\kappa}\!}
    \nonumber\\[1.0mm]
    &\ +\,
    \nu_{5}\,\fket{\!1_{\kappa\pr}\!}\fket{\!1_{\kappa\prpr}\!}\!\fbra{\!1_{\kappa\pr}\!}\fbra{\!1_{\kappa}\!}
    \nonumber\\[1.5mm]
    &\ +\,
    \nu_{6}\,\fket{\!1_{\kappa}\!}\fket{\!1_{\kappa\prpr}\!}\!\fbra{\!1_{\kappa\pr}\!}\fbra{\!1_{\kappa}\!}\,+\,\mathrm{H.c.}\,\Bigr)\,.
    \nonumber
\end{align}
The relevant consistency conditions to construct the three different reduced two-mode density matrices
$\varrho_{\kappa\kappa\pr}$, $\varrho_{\kappa\kappa\prpr}$ and $\varrho_{\kappa\pr\kappa\prpr}$ are given by
\begin{subequations}
\label{eq:three mode consistency conditions}
    \begin{align}
        \tr\bigl(\,(b_{\kappa}^{\dagger}b_{\kappa\pr}\,+\,b_{\kappa\pr}^{\dagger}b_{\kappa})\,
        \varrho_{\kappa\kappa\pr\kappa\prpr}\bigr)  &=\,2\,\mathrm{Re}(\nu_{3}\,+\,\nu_{4})\,,
        \label{eq:three mode consistency conditions kappa kappa pr}\\[1.5mm]
        \tr\bigl(\,(b_{\kappa}^{\dagger}b_{\kappa\prpr}\,+\,b_{\kappa\prpr}^{\dagger}b_{\kappa})\,
        \varrho_{\kappa\kappa\pr\kappa\prpr}\bigr)  &=\,2\,\mathrm{Re}(\nu_{2}\,-\,\nu_{5})\,,
        \label{eq:three mode consistency conditions kappa kappa prpr}\\[1.5mm]
        \tr\bigl(\,(b_{\kappa\pr}^{\dagger}b_{\kappa\prpr}\,+\,b_{\kappa\prpr}^{\dagger}b_{\kappa\pr})\,
        \varrho_{\kappa\kappa\pr\kappa\prpr}\bigr)  &=\,2\,\mathrm{Re}(\nu_{1}\,+\,\nu_{6})\,.
        \label{eq:three mode consistency conditions kappa pr kappa prpr}
    \end{align}
\end{subequations}
Again, the correct partial traces are obtained by tracing ``inside out" [see Eq.~(\ref{eq:partial trace offdiagonal ambiguous inside out})].
This is not a coincidence. The prescription for the partial trace to anticommute operators towards the projector of the vacuum state before
eliminating them takes into account the number of anticommutations occurring in computations of the expectation values of
Eq.~(\ref{eq:consistency condition}). A matrix representation of the three-mode state $\varrho_{\kappa\kappa\pr\kappa\prpr}$ is given by
\begin{align}
\varrho_{\kappa\kappa\pr\kappa\prpr}    &=\,
    \begin{pmatrix}
        \mu_{1}     &   0           &   0           &   0           &   0       &   0           &   0       &   0       \\
        0           &   \mu_{2}     &   \nu_{1}     &   0           &   \nu_{2} &   0           &   0       &   0       \\
        0           &   \nu_{1}^{*} &   \mu_{3}     &   0           &   \nu_{3} &   0           &   0       &   0       \\
        0           &   0           &   0           &   \mu_{4}     &   0       &   \nu_{4}     &   \nu_{5} &   0       \\
        0           &   \nu_{2}^{*} &   \nu_{3}^{*} &   0           &   \mu_{5} &   0           &   0       &   0       \\
        0           &   0           &   0           &   \nu_{4}^{*} &   0       &   \mu_{6}     &   \nu_{6} &   0       \\
        0           &   0           &   0           &   \nu_{5}^{*} &   0       &   \nu_{6}^{*} &   \mu_{7} &   0       \\
        0           &   0           &   0           &   0           &   0       &   0           &   0       &   \mu_{8}
    \end{pmatrix}\,.
\label{eq:general three-mode fermion state supseselection matrix rep}
\end{align}
Similar as before, one can try to interpret Eq.~(\ref{eq:general three-mode fermion state supseselection matrix rep}) as a matrix
representation of a three-qubit state and exchange the signs of the basis vectors in the three qubit state such that the consistency
conditions of Eq.~(\ref{eq:three mode consistency conditions}) are met, i.e., opposite signs in front of $\nu_{2}$ and $\nu_{6}$, while the signs
in front of the pairs $\nu_{3},\nu_{4}$ and $\nu_{1},\nu_{6}$ are each the same. This is not possible, even though superselection
rules are respected. This suggests that the superselection rules only coincidentally aid the fermionic qubit mapping for two modes.
They simply force all the problematic coefficients to disappear. However, for more than two modes we find here that a mapping to a
tensor product space cannot be performed consistently in general. Therefore, computing a measure like the negativity to determine the
entanglement between more than two modes appears to be meaningless. Due to the lack of practical alternatives, the minimization over
all states consistent with charge superselection to find $\bar{E}_{oF}$ of Eq.~(\ref{eq:fermionic entanglement of formation}) should
be considered since the restriction of the set of permissable states could make this computation feasible.

\section{conclusion}\label{sec:conclusion}

We have discussed the implementation of fermionic modes as fundamental objects for quantum information tasks. The foundation of this
task is the rigorous construction of the notion of mode subsystems in a fermionic Fock space. We have demonstrated that this can
be achieved despite the absence of a simple tensor product structure. Our simple consistency conditions give a clear picture of
this process, which can be easily executed operationally by performing partial traces ``inside out." Thus we show that fermionic
mode entanglement, quantified by the (fermionic) entanglement of formation or any other function of the eigenvalues of the reduced
states, is indeed a well-defined concept, free of any ambiguities and independent of any superselection rules.

However, problems arise when mappings from the fermionic Fock space to qubit spaces are attempted. We have explicitly demonstrated in
two examples, for two and three modes, that such mappings cannot generally succeed. Only in the limited case where only two modes are
considered and the quantum states obey charge superselection can one meaningfully speak of an equivalence between the two fermionic
modes and two qubits. In this case the application of tools such as the negativity or concurrence is justified. We have argued that
these measures will at least provide a lower bound to genuine measures of fermionic mode entanglement.

Nonetheless, open questions remain. In particular, it is not clear if any operational measures exist for situations beyond two qubits. In
Ref.~\cite{FriisHuberFuentesBruschi2012} witnesses for genuine multipartite entanglement are employed, which are completely compatible with
the framework we have presented here, but these witnesses can only provide lower bounds to entropic entanglement measures.

Finally, we have conjectured that the entanglement in fermionic modes is accessible even in spite of superselection rules that restrict
the possible operations performed on single modes by means of entanglement swapping. The investigation of this question, while beyond the scope
of this article, will certainly be of future interest.

%\vspace*{-0.65cm}

\begin{acknowledgements}
We thank Gerardo Adesso, Samuel~L.~Braunstein, \v{C}aslav Brukner, Fabio Costa, Ivette Fuentes, Marcus Huber, Jorma Louko, Eduardo Mart\'{i}n-Mart\'{i}nez, Miguel Montero, Karoline M{\"u}hlbacher, Carlos Sab\'{i}n, Michael~Skotiniotis, Vlatko Vedral, and Magdalena Zych for useful
discussions and comments. N.~F. acknowledges support from EPSRC (CAF Grant No.~EP/G00496X/2 to I.~F.).
We also want to thank the Perimeter Institute for Theoretical Physics and the organizers of the RQI-N 2012 conference,
where part of this research was conducted.
\end{acknowledgements}

\end{document}